\pgfplotsset{compat=1.18}
\newlength\fwidth
\newcommand{\R}[1]{\ensuremath{\mathbb{R}^{#1}}}
\renewcommand{\S}[1]{\ensuremath{\mathbb{S}_{#1}}}
\renewcommand{\P}[2][]{\mathbb{P}_{#1} ( #2 ) }
\newcommand{\cN}{\bm{\mathcal{N}}}
\newcommand{\Zplus}{\mathbb{Z}\sub{+}}
\newcommand{\E}[1]{\mathbb{E}\left[#1\right]}
\newcommand{\Cov}[1]{\mathrm{Cov}\left( #1 \right)}
\newcommand{\tr}[1]{\ensuremath{\mathrm{tr} \left( #1 \right) }}
\newcommand{\vertcat}[1]{\ensuremath{\mathrm{vertcat} (#1)}}
\newcommand{\bdiag}[1]{\ensuremath{\mathrm{bdiag} (#1)}}
\newcommand{\transpose}{\ensuremath{^{\mathrm{T}}}}
\newcommand{\sub}[1]{\ensuremath{_{#1}}}
\newcommand{\spr}[1]{\ensuremath{^{#1}}}
\newcommand{\curly}[1]{ \ensuremath{ \{ #1 \} } }
\newtheorem{theorem}{Theorem}
\newtheorem{proposition}{Proposition}
\newtheorem{problem}{Problem}
\theoremstyle{definition}
\newtheorem{remark}{Remark}
\newtheorem{assumption}{Assumption}
\renewenvironment{proof}
    {\noindent \textit{Proof.}}
    {\hfill \ensuremath{\blacksquare}}
\title{Density Steering of Gaussian Mixture Models for Discrete-Time Linear Systems}
\author{Isin M. Balci \and Efstathios Bakolas}
\date{September 2023}
\begin{document}

\maketitle

\begin{abstract}
In this paper, we study the finite-horizon optimal density steering problem for discrete-time stochastic linear dynamical systems. Specifically, we focus on steering probability densities represented as Gaussian mixture models which are 
known to give good approximations for general smooth probability density functions. We then revisit the covariance steering problem for Gaussian distributions and derive its optimal control policy. Subsequently, we propose a randomized policy to enhance the numerical tractability of the problem and demonstrate that under this policy the state distribution remains a Gaussian mixture. By leveraging these results, we reduce the Gaussian mixture steering problem to a linear program. 
We also discuss the problem of steering general distributions using Gaussian mixture approximations. 
Finally, we present results of non-trivial numerical experiments and demonstrate that our approach 
can be applied to general distribution steering problems.
\end{abstract}

\section{Introduction}\label{s:introduction}
In this paper, we consider the problem of characterizing control policies that can steer the probability distribution of the initial state of a discrete-time stochastic linear dynamical system to a target distribution in finite time. 
This type of problems can be classified as a variant of the ``optimal mass transport" problem \cite{p:chen2016OMToverLinearDS}. 
Specifically, we focus on Gaussian mixture models to represent the initial and target state probability distributions due to their universal approximation properties \cite[Chapter 3]{b:stergiopoulos2017advSignalProHandbook}.

Typically, three different approaches are employed to solve density steering problems. In the first approach, continuous-time dynamics are utilized to derive the Fokker-Planck partial differential equation (PDE), describing the evolution of the probability density function of the state. Control policies are designed for this PDE using Lyapunov-based methods \cite{p:eren2017velocityFieldDensityControl, p:zheng2022backsteppingDensityControl}. The second approach considers a discrete state space and employs Markov chain-based methods, and utilizes convex optimization tools to design a transition matrix of the Markov chain for steering the probability distribution \cite{p:demir2015decentralizedProbabilisticDensityControl, p:djeumou2022probabilisticSwarmGTL}. Lastly, optimal mass transport-based approaches treat the dynamic problem as a static mass transport problem using transition costs between initial and terminal states. 
This associated problem is solved using well-known optimal mass transport algorithms \cite{p:chen2016OMToverLinearDS, p:debadyn2021discreteTimeLQRviaOT, p:ito2023entropicMPCOT, p:krishnan2018distributedOTforSwarm}.

Our approach is primarily aligned with optimal mass transport-based methods. 
However, rather than seeking the optimal transport map in a general discretized state-space, we adopt Gaussian mixture models in conjunction with covariance steering theory \cite{p:chen2015optimalCSI, p:goldshtein2017finitehorizonCS, p:balci2022exact}. 
This enables us to formulate and solve a lower-dimensional linear program, offering a more efficient and effective approach.

\noindent \textit{Literature Review:} The problem of density steering has received significant attention in previous research. In \cite{p:debadyn2021discreteTimeLQRviaOT}, the authors propose a similar problem and re-frame it as an optimal mass transport problem. 
Meanwhile, in \cite{p:terpin2023DPinProbSpace}, the authors tackle the density steering problem using dynamic programming over probability spaces. However, it is worth noting that both of teh aforementioned approaches involve discretizing the continuous state-space, as the computation of the optimal transport map and Bellman recursion necessitates such discretization.

Additionally, Markov chain-based density steering methods also rely on discretizing the continuous state space to synthesize a Markov chain that characterizes the probabilistic motion of individual agents. In the infinite horizon case, formulating this Markov chain can be cast as a convex semi-definite program \cite{p:demir2015decentralizedProbabilisticDensityControl}, while in the finite horizon case the problem is not convex and should be addressed as a general nonlinear program \cite{p:hoshino2021MCWassersteinCarSharing, p:djeumou2022probabilisticSwarmGTL}.

In \cite{p:halder2021NonlinearDensity}, the authors derive the necessary conditions for optimality for the density steering problem with nonlinear drift in terms of coupled PDEs which they solve by using the Feynman-Kac lemma and point cloud sampling. In \cite{p:sivaramakrishnan2022distributionSteeringGeneralDist}, the authors study the distribution steering problem for linear systems excited by non-Gaussian noise using characteristic functions. In \cite{p:wu2023powerMomentSteering}, the authors employ power moments to formulate convex optimization problems for steering general probability densities in a one-dimensional setting.
A PDE-based optimal robotic swarm coverage control policy is obtained by deriving necessary conditions of optimality in \cite{p:elamvazhuthi2015optimalCoverageControl}.
For systems affected by multiplicative noise, infinite horizon optimal density steering laws have been derived in \cite{p:bakshi2020stabilizingDensityControl}.
Finally, in \cite{p:saravanos2023distributedLargeScale}, a hierarchical clustering-based density steering algorithm is presented for distributed large-scale applications.

The density steering problem using Gaussian mixture models has been investigated in the past \cite{p:abdulghafoor2023LargeScaleControl, p:zhu2021adaptive}.
However, a notable limitation of \cite{p:abdulghafoor2023LargeScaleControl, p:zhu2021adaptive} is that their proposed density steering methods do not explicitly account for the system's dynamics. Moreover, their approach relies on spatial discretization methods, thereby failing to fully exploit the advantages offered by Gaussian mixture models.

\noindent \textit{Main Contributions:} 
All of the aforementioned approaches for density steering problems require either state space discretization, solving partial differential equations, or extensive sampling. In contrast, our approach sidesteps these issues, offering a computationally efficient solution to the optimal density steering problem.
To this aim, we firstly derive a closed-form solution for the optimal state feedback policy, which is designed to guide an initial Gaussian state distribution to a desired terminal state while considering quadratic input and state costs. 
Secondly, we introduce a class of randomized state feedback policies aimed at simplifying the primary problem into a finite-dimensional optimization challenge. We also demonstrate that this control policy ensures that the state density will remain a Gaussian mixture model for the whole time horizon. Thirdly, we reveal that the intricate finite-dimensional optimization problem can be effectively reduced to a linear program. 
Finally, we showcase the effectiveness of our methodology in steering arbitrary densities through Gaussian mixture approximation, leveraging the expectation-maximization algorithm.

\section{Problem Formulation}\label{s:problem-formulation}
\noindent \textbf{Notation:} $\R{n}$ ($\R{n \times m}$) denotes the space of $n$-dimensional real vectors ($n\times m$ matrices). $\Zplus$ represents positive integers. 
Positive definite (semi-definite) $n\times n$ matrices are denoted by $\S{n}^{++}$ ($\S{n}^{+}$). 
For any $x\in \R{n}$ and $Q \in \S{n}^{+}$, $\lVert x \rVert_Q$ is the notation for $\sqrt{x\transpose Q x }$. 
The identity matrix of size $n \times n$ is $I\sub{n}$. 
Vertical concatenation of vectors or matrices $x_0, \dots, x_N$ is denoted as $\vertcat{x_0, x_1, \dots, x_N}$. 
The trace operator is represented by $\tr{\cdot}$. 
The nuclear norm of a matrix $A \in \R{n \times m}$ is denoted by $\lVert A \rVert_{*}$.
The block diagonal matrix with diagonal blocks $A_1, \dots, A_{N}$ is written as $\bdiag{A_1, \dots, A_{N}}$. 
Expectation and covariance of a random variable $x$ are given as $\E{x}$ and $\Cov{x}$, respectively. 
The notation $x \sim \cN (\mu, \Sigma)$ means that $x$ is a Gaussian random variable with mean $\mu$ and covariance $\Sigma$.
We use $\Delta_n$ to denote the probability simplex in $\mathbb{R}^n$, where $\Delta_n:=\{[p_1,\dots, p_n]\transpose\in\mathbb{R}^n: \sum_{i=1}^n p_i = 1 ~\text{and}~ p_i\geq 0~ \forall i\}$.
When $x$ follows a Gaussian mixture model, we write $x \sim \mathbf{GMM}(\curly{p\sub{i}, \mu_i, \Sigma_i}\sub{i=0}\spr{n-1})$ where $\curly{p_i}_{i=0}\spr{n-1} \in \Delta_n$. 
The probability density function of a random variable $x \in \R{n}$ evaluated at $x' \in \R{n}$ is $\P[x]{x'}$.
If $x\sim \mathcal{N}(\mu, \Sigma)$, the $\P[x]{x'}$ is denoted as $\P[\mathcal{N}]{x'; \mu, \Sigma}$.
For random variables $x \in \R{n}, y \in \R{m}$, $x|y=\Hat{y}$ denotes the conditional random variable $x$ given $y=\Hat{y}$. 
Finally, for an arbitrary set $\mathcal{A} \subseteq \R{n}$, $\mathcal{P} (\mathcal{A})$ represents the set of all random variables over $\mathcal{A}$.

\noindent \textbf{Problem Setup and Formulation:}
We consider a discrete-time stochastic linear dynamical system:
\begin{equation}\label{eq:linear-system-eq}
    x\sub{k+1} = A_k x_k + B_k u_k
\end{equation}
where $x_k \in \R{n}$ and $u\sub{k} \in \R{m}$ are the state and the input processes, respectively. 
We assume that $x\sub{0} \sim \mathbf{GMM}(\curly{p\sub{i}\spr{0}, \mu_{i}\spr{0}, \Sigma_{i}\spr{0}}\sub{i=0}\spr{r-1})$ such that $\curly{p_i}_{i=0}\spr{r-1} \in \Delta_r$, $\mu\sub{i}\spr{0} \in \R{n}$ and $\Sigma\sub{i}\spr{0} \in \mathbb{S}\sub{n}\spr{++}$ for all $i \in \curly{0, \dots, n-1}$.

A control policy with a horizon $N \in \mathbb{Z}\spr{+}$ for the system in \eqref{eq:linear-system-eq} is defined as a sequence of control laws $\pi = \curly{\pi\sub{i}}\sub{i=0}\spr{N-1}$ where each $\pi\sub{k} : \R{n} \rightarrow \mathcal{P} (\R{m})$ is a function that maps the state $x_k$ to a random variable representing control inputs. 
The set of randomized control policies is denoted by $\Pi$.
Throughout the paper, we consider quadratic cost functions in the form:
\begin{align}\label{eq:quadratic-cost-definitin}
    J(X\sub{0:N}, U\sub{0:N-1}) = & \sum\sub{k=0}\spr{N-1} \lVert u\sub{k} \rVert\sub{R\sub{k}}\spr{2} + \lVert x\sub{k} - x'\sub{k} \rVert\sub{Q\sub{k}}\spr{2} \nonumber \\
    & \qquad + \lVert x_N - x_N' \rVert_{Q_N}\spr{2}
\end{align}
where $X\sub{0:N} = \curly{x\sub{k}}\sub{k=0}\spr{N}$ is the state process and $U\sub{0:N-1} = \curly{u\sub{k}}\sub{k=0}\spr{N-1}$ is the input process.
Since we have defined the system dynamics and control policies, now we provide the formulation of the Gaussian mixture model (GMM) density steering problem:
\begin{problem}[GMM Density Steering Problem]\label{prob:main-problem-definition}
Let $N \in \mathbb{Z}\spr{+}$, $A\sub{k} \in \R{n \times n}, B\sub{k} \in \R{n \times m}$, $R\sub{k} \in \S{m}\spr{++}, Q\sub{k} \in \S{n}\spr{+}$  be given for all $k \in \curly{0, \dots, N-1}$. 
Also, let $r, t \in \mathbb{Z}\spr{+}$, $\curly{p\sub{i}\spr{0}}\sub{i=0}\spr{r-1} \in \Delta_r$, $\curly{p\sub{i}\spr{d}}\sub{i=0}\spr{t-1} \in \Delta_t$, $\curly{\mu\spr{0}\sub{i}}\sub{i=0}\spr{r-1}$, $\curly{\mu\spr{d}\sub{i}}\sub{i=0}\spr{t-1}$, 
$\curly{\Sigma\spr{0}\sub{i}}\sub{i=0}\spr{r-1}$, 
$\curly{\Sigma\spr{d}\sub{i}}\sub{i=0}\spr{t-1}$ such that $\mu\sub{i}\spr{0} , \mu\sub{i}\spr{d} \in \R{n}$ and $\Sigma\sub{i}\spr{0}, \Sigma\sub{i}\spr{d} \in \S{n}\spr{++}$ be given. 
Find an admissible control policy $\pi\spr{\star} \in \Pi$ that solves the following problem:
\begin{subequations}\label{eq:main-problem-opt-problem}
\begin{align}
    \min\sub{\pi \in \Pi} & ~~ \E{ J(X_{0:N}, U_{0:N-1}) } \\
    \text{s.t.} & ~~ \eqref{eq:linear-system-eq} \nonumber \\
    & ~~ x\sub{0} \sim \mathbf{GMM}\left(\curly{p\sub{i}\spr{0}, \mu\spr{0}\sub{i}, \Sigma\spr{0}\sub{i}}\sub{i=0}\spr{r-1} \right) \\
    & ~~ x\sub{N} \sim \mathbf{GMM}\left(\curly{p\sub{i}\spr{d}, \mu_i\spr{d}, \Sigma_i\spr{d}}\sub{i=0}\spr{t-1}\right) \\
    & ~~ u_k = \pi\sub{k}(x\sub{k})
\end{align}
\end{subequations}
\end{problem}

Throughout the paper, we make the following assumption regarding the controllability of the system in \eqref{eq:linear-system-eq}. 
Note that this is not a limiting assumption since many real-world linear dynamical systems are actually controllable. 
\begin{assumption}\label{assumption:linear-controllable}
    The system dynamics given in \eqref{eq:linear-system-eq} is controllable over a given problem horizon $N$. 
    In other words, the controllability grammian defined as:
    \begin{equation}\label{eq:ctrl-gram-def}
        \mathcal{G}_{N:0} = \sum_{k=0}^{N-1} \Phi_{N, k+1} B_k B_k\transpose \Phi_{N, k+1}\transpose  
    \end{equation}
    is non-singular with $\Phi_{k_2, k_1} := A_{{k_2}-1} A_{{k_2}-2} \dots A_{k_1}$, $\Phi_{k, k} = I_n$ for all $k_2, k_1 \in \mathbb{Z}^+$ such that $k_2 \geq k_1$.
\end{assumption}

\section{Optimal Covariance Steering for Gaussian Distributions}\label{s:optimal-cs-for-gaussian}
Optimal covariance steering problems for linear dynamical systems with quadratic cost functions have been extensively studied in the literature \cite{p:chen2015optimalCSI, p:goldshtein2017finitehorizonCS}. 
A closed-form solution to the optimal covariance steering problem is provided in \cite{p:goldshtein2017finitehorizonCS} for the case where the cost function is a convex quadratic function of the input. 
In this section, we expand upon these findings and derive a closed-form solution to the covariance steering problem, considering quadratic cost functions that depend on both the state and input. 
This result will be subsequently applied in Section \ref{s:reduce-to-lp} to formulate Problem \ref{prob:main-problem-definition} as a linear program. 
To begin, we revisit the formal definition of the Gaussian covariance steering problem:
\begin{problem}[Gaussian Covariance Steering]\label{prob:covariance-steering}
Let $\mu_0, \mu_d \in \R{n}$, $\Sigma\sub{0}, \Sigma\sub{d} \in \S{n}\spr{++}$, $R\sub{k} \in \S{m}\spr{++}$, $Q\sub{k} \in \S{n}\spr{+}$ for all $k \in \curly{0, \dots, N}$ be given. 
Find an admissible control policy $\pi\spr{\star} \in \Pi$ that solves the following problem:
\begin{subequations}\label{eq:GaussianCovSteer-Opt-Problem}
\begin{align}
    \min\sub{\pi \in \Pi} & ~~ \E{J(X_{0:N}, U_{0:N-1})} \\
    \text{s.t.} & ~~ \eqref{eq:linear-system-eq} \nonumber \\
    & ~~ x\sub{0} \sim \mathcal{N} (\mu_0, \Sigma_0) \\
    & ~~ x\sub{N} \sim \mathcal{N} (\mu_d, \Sigma_d) 
\end{align}
\end{subequations}
\end{problem}
In \cite{p:balci2022exact}, it is demonstrated that the optimal policy for the linear-quadratic Gaussian density steering problem takes the form of a deterministic affine state feedback policy, expressed as $\pi_k(x_k) = \Bar{u}\sub{k} + K_k (x_k - \mu_k)$ where $\mu_k = \E{x_k}$. 
Furthermore, for deterministic linear systems, this affine state feedback policy can be equivalently expressed in terms of the initial state as $\pi_k(x_0) = \Bar{u}_k + L_k (x_0 - \mu_0)$ \cite{p:goldshtein2017finitehorizonCS}. 
Consequently, the optimal Gaussian covariance steering problem can be rewritten equivalently in terms of decision variables $\curly{\Bar{u}_k, L_k}_{k=0}^{N-1}$ as follows:
\begin{subequations}\label{eq:Gaussian-Cov-Steer-Finite-Dim}
\begin{align}
    \min\sub{\mathbf{\Bar{U}}, \mathbf{L}} & ~~ \mathbf{\Bar{U}}\transpose \mathbf{R} \mathbf{\Bar{U}} + \tr{ \mathbf{R} \mathbf{L} \Sigma_{0} \mathbf{L}\transpose } + \mathbf{\Tilde{X}}\transpose \mathbf{Q} \mathbf{\Tilde{X}} \nonumber \\
    & ~~ + \tr{\mathbf{Q} (\mathbf{\Gamma} + \mathbf{H_u} \mathbf{L}) \Sigma_0 (\mathbf{\Gamma} + \mathbf{H_u} \mathbf{L})\transpose} \\
    \text{s.t.} & ~~ \mu_d = \Phi_{N, 0} \mu_0 + \mathbf{B_N} \mathbf{\Bar{U}},  \label{eq:Gaussian-CS-constr-mean}\\
    & ~~ \Sigma_d = (\Phi_{N, 0} + \mathbf{B_N} \mathbf{L}) \Sigma_0 (\Phi_{N, 0} + \mathbf{B_N} \mathbf{L})\transpose, \label{eq:Gaussian-CS-constr-cov}
\end{align}
\end{subequations}
where $\mathbf{X} := \vertcat{x_0, \dots, x_N}$, $\mathbf{\Bar{X}} := \E{\mathbf{X}}$, $\mathbf{\Tilde{X}} := \mathbf{\Bar{X}} - \mathbf{X'}$, , $\mathbf{X'} := \vertcat{x_0', \dots, x_N'}$, $\mathbf{U} := \vertcat{u_0, \dots, u_{N-1}}$, $\mathbf{\Bar{U}} := \E{\mathbf{U}} = \vertcat{\Bar{u}_0, \dots, \Bar{u}_{N-1}}$, $\mathbf{Q}:=\bdiag{Q_0, \dots, Q_{N}}$, $\mathbf{R} := \bdiag{R_0, \dots, R_{N-1}}$, $\mathbf{L}:=\vertcat{L_0, \dots, L_{N-1}}$, $\mathbf{B_N}:= [\Phi_{N, 1}B_0, \Phi_{N, 2}B_1, \dots, \Phi_{N, N}B_{N-1}]$. Note that $\mathbf{B_N} \mathbf{B}_{\mathbf{N}}\transpose = \mathcal{G}_{N:0}$ where $\mathcal{G}_{N:0}$ is defined as in \eqref{eq:ctrl-gram-def}.
The matrices $\mathbf{\Gamma}, \mathbf{H_u}$ is derived from the vertical concatenation of the state and input vectors as the concatenated vectors satisfy the following equalities:
\begin{subequations}\label{eq:concat-dynamics}
\begin{align}
    \mathbf{X}  & = \mathbf{\Gamma} x_0 + \mathbf{H_u U}, \\
    \mathbf{U} & = \mathbf{L} (x_0 - \mu_0)  + \mathbf{\Bar{U}}.
\end{align}
\end{subequations}
The derivation of \eqref{eq:concat-dynamics} and the expressions of the matrices $\mathbf{\Gamma}$ and $\mathbf{H_u}$ are omitted due to space restrictions and the readers can refer to \cite{p:goldshtein2017finitehorizonCS}.

The constraints in \eqref{eq:Gaussian-CS-constr-mean} and \eqref{eq:Gaussian-CS-constr-cov} correspond to the mean and covariance steering constraints in the problem defined in \eqref{eq:Gaussian-Cov-Steer-Finite-Dim}. 
Since the state mean depends on $\mathbf{\Bar{U}}$ and the state covariance depends on $\mathbf{L}$, and the objective function is separable in $\mathbf{\Bar{U}}$ and $\mathbf{L}$, the mean and covariance steering problems can be decoupled. 
The mean steering problem is formulated as follows:
\begin{align}\label{eq:Mean-Steering-Problem}
    \min_{\mathbf{\Bar{U}}} & ~~ J_{\mathrm{mean}}(\mathbf{\Bar{U}}) \\
    \text{s.t.} & ~~ \eqref{eq:Gaussian-CS-constr-mean} \nonumber
\end{align}
where 
\begin{align}\label{eq:mean-steering-objective}
    J_{\mathrm{mean}}(\mathbf{\Bar{U}}) := &   \mathbf{\Bar{U}}\transpose \mathbf{R} \mathbf{\Bar{U}} \nonumber \\
    & + (\mathbf{\Gamma}\mu_0 + \mathbf{H_u \Bar{U}})\transpose \mathbf{Q} \left( \boldsymbol{\Gamma}  \mu_0 + \mathbf{H_u} \mathbf{\Bar{U}} \right).
\end{align}
Note that the problem in \eqref{eq:Mean-Steering-Problem} is a strictly convex quadratic program with affine equality constraints since $R_k \in \S{m}\spr{++}$. The closed-form solution to such problems can be obtained using the first-order conditions of optimality (KKT conditions) \cite{b:boyd_vandenberghe_2004}. The following proposition provides the optimal feed-forward control input $\mathbf{\Bar{U}}$.
\begin{proposition}\label{prop:optimal-mean-steering}
Under Assumption \ref{assumption:linear-controllable}, the optimal control sequence $\mathbf{\Bar{U}}\spr{\star}$ that solves the problem in \eqref{eq:Mean-Steering-Problem} is given by:
\begin{align}
    \Lambda\spr{\star} = &  2 (\mathbf{B_N} M^{-1} \mathbf{B}_{\mathbf{N}}\transpose)^{-1} \times \nonumber \\
    & ~~ (\mathbf{B_N} M^{-1} \mathbf{H_u}\transpose \mathbf{Q} Y + (\mu_d - \Phi_{N,0} \mu_0 )), \\
    \mathbf{\Bar{U}}\spr{\star} = &  (1/2) M^{-1} (\mathbf{B_N}\transpose \Lambda\spr{\star} - 2 \mathbf{H_u}\transpose \mathbf{Q} Y),
\end{align}
where $M = \mathbf{R} + \mathbf{H_u Q H_u}\transpose$, $Y = \mathbf{\Gamma} \mu_0 - \mathbf{\Bar{X}}$.
\end{proposition}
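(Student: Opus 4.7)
The plan is to treat \eqref{eq:Mean-Steering-Problem} as a strictly convex equality-constrained quadratic program and apply the KKT conditions, which are both necessary and sufficient in this setting. Strict convexity is immediate: the Hessian of $J_{\mathrm{mean}}$ with respect to $\mathbf{\Bar{U}}$ is $2 M = 2(\mathbf{R} + \mathbf{H_u}\transpose \mathbf{Q} \mathbf{H_u})$, and since $\mathbf{R} \in \S{m}\spr{++}$ and $\mathbf{Q} \in \S{n}\spr{+}$, we have $M \in \S{}\spr{++}$, so $M$ is invertible. In particular, a KKT point, if it exists, is the unique global minimizer.

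Next I would form the Lagrangian with a multiplier $\Lambda \in \R{n}$ attached to \eqref{eq:Gaussian-CS-constr-mean}, namely
\begin{equation*}
\mathcal{L}(\mathbf{\Bar{U}}, \Lambda) = J_{\mathrm{mean}}(\mathbf{\Bar{U}}) - \Lambda\transpose \bigl( \mathbf{B_N} \mathbf{\Bar{U}} - (\mu_d - \Phi_{N,0} \mu_0) \bigr),
\end{equation*}
and write out the stationarity condition $\nabla_{\mathbf{\Bar{U}}} \mathcal{L} = 0$. Expanding the gradient of $J_{\mathrm{mean}}$ and isolating $\mathbf{\Bar{U}}$ gives
\begin{equation*}
2 M \mathbf{\Bar{U}} = \mathbf{B_N}\transpose \Lambda - 2 \mathbf{H_u}\transpose \mathbf{Q} Y,
\end{equation*}
where $Y$ collects the $\mu_0$- and reference-trajectory-dependent affine terms arising from $\mathbf{\Gamma}\mu_0 - \mathbf{X}'$. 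Inverting $M$ then yields the stated formula for $\mathbf{\Bar{U}}\spr{\star}$ in terms of $\Lambda\spr{\star}$.

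The remaining step is to eliminate $\Lambda$ by imposing primal feasibility. Substituting the expression for $\mathbf{\Bar{U}}$ into $\mathbf{B_N} \mathbf{\Bar{U}} = \mu_d - \Phi_{N,0} \mu_0$ produces the linear equation
\begin{equation*}
\tfrac{1}{2} \mathbf{B_N} M^{-1} \mathbf{B_N}\transpose \Lambda = \mathbf{B_N} M^{-1} \mathbf{H_u}\transpose \mathbf{Q} Y + (\mu_d - \Phi_{N,0} \mu_0),
\end{equation*}
and solving for $\Lambda\spr{\star}$ reproduces the proposition's formula. The key obstacle, and the only place Assumption \ref{assumption:linear-controllable} enters, is justifying that $\mathbf{B_N} M^{-1} \mathbf{B_N}\transpose$ is invertible. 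I would argue this as follows: if $v\transpose \mathbf{B_N} M^{-1} \mathbf{B_N}\transpose v = 0$, then $M^{-1} \succ 0$ forces $\mathbf{B_N}\transpose v = 0$; but Assumption \ref{assumption:linear-controllable} gives $\mathbf{B_N} \mathbf{B_N}\transpose = \mathcal{G}_{N:0} \succ 0$, hence $\mathbf{B_N}$ has full row rank and $v = 0$.

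Finally, I would note that since $M \succ 0$ and the constraint is affine with $\mathbf{B_N}$ surjective, Slater's condition is trivially satisfied, so the pair $(\mathbf{\Bar{U}}\spr{\star}, \Lambda\spr{\star})$ is not only a KKT point but, by strict convexity, the unique global optimum of \eqref{eq:Mean-Steering-Problem}. The rest of the proof is a direct verification that the closed-form expressions satisfy stationarity and the affine mean-steering constraint, which amounts to routine linear algebra.
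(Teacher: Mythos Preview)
Your proposal is correct and follows exactly the approach the paper indicates: the paper does not give a formal proof but simply notes just before the proposition that \eqref{eq:Mean-Steering-Problem} is a strictly convex quadratic program with affine equality constraints whose closed-form solution follows from the KKT conditions. Your write-up fleshes out precisely those KKT computations, including the one nontrivial point the paper leaves implicit, namely that Assumption~\ref{assumption:linear-controllable} (full row rank of $\mathbf{B_N}$) together with $M \succ 0$ guarantees invertibility of $\mathbf{B_N} M^{-1} \mathbf{B_N}\transpose$.
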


Similarly, the decoupled covariance steering problem can be written as follows:
\begin{align}\label{eq:Cov-Steering-Problem}
    \min_{\mathbf{L}} & ~~ J_{\mathrm{cov}}(\mathbf{L}) \\
    \text{s.t.} & ~~ \eqref{eq:Gaussian-CS-constr-cov} \nonumber
\end{align}
where 
\begin{align}\label{eq:cov-steering-objective}
    J_{\mathrm{cov}}(\mathbf{L}) := & \tr{\mathbf{R} \mathbf{L}  \Sigma_0 \mathbf{L}\transpose} \nonumber \\
    & +\tr{ \mathbf{Q} (\mathbf{\Gamma} + \mathbf{H_u} \mathbf{L})\Sigma_0 (\mathbf{\Gamma} + \mathbf{H_u} \mathbf{L})\transpose }.
\end{align}
The objective function $J_{cov}(\mathbf{L})$ of the covariance steering problem given in \eqref{eq:Cov-Steering-Problem} is a convex quadratic function of the decision variable $\mathbf{L}$. 
However, the terminal covariance constraint in \eqref{eq:Gaussian-CS-constr-cov} is a non-convex quadratic equality constraint. The next proposition provides the closed-form solution to problem given in \eqref{eq:Cov-Steering-Problem} and the optimal value of the objective function as a function of the problem parameters.
\begin{proposition}\label{prop:optimal-covariance-steering}
Given that Assumption \ref{assumption:linear-controllable} holds for the system defined in \eqref{eq:linear-system-eq}, then optimal sequence of feedback controller gains $\mathbf{L}\spr{\star}$ that solves the problem in \eqref{eq:Cov-Steering-Problem} is given by:
\begin{subequations}
\begin{align}
    \mathbf{L}\spr{\star} & = \mathbf{h} + \mathbf{D} Z, \\
    \mathbf{h} & = \mathbf{B}_{\mathbf{N}}\transpose \mathcal{G}_{N:0}\spr{-1} \left( \Sigma_{d}^{1/2} T \Sigma_0^{-1/2} - \Phi_{N,0} \right), \\
    Z & = - \left( \mathbf{D}\transpose M \mathbf{D} \right)^{-1} \mathbf{D}\transpose \left( M \mathbf{h} + \mathbf{H}_{\mathbf{u}}\transpose \mathbf{Q} \mathbf{\Gamma} \right), \\
    T & = - V_{\Omega} U\transpose_{\Omega}, \\
    \Omega & = \Sigma_0^{1/2} \left( \Theta_5\transpose \mathbf{Q} \mathbf{H_u} - \Theta_4\transpose \mathbf{R} \right) \Theta_1 \mathbf{B_N}\transpose \mathcal{G}_{N:0}\spr{-1} \Sigma_{d}\spr{1/2},
\end{align}
\end{subequations}
where $\mathbf{D} \in \R{mN \times mN - n}$ is an arbitrary matrix whose columns are orthogonal to the vector space spanned by the columns of $\mathbf{B}_{\mathbf{N}}\transpose$, i.e., $\mathbf{B_N} \mathbf{D} = \bm{0}$, $M = \mathbf{R} + \mathbf{H}_\mathbf{u}\transpose \mathbf{Q} \mathbf{H_u}$, $\Omega = U_{\Omega} \Lambda_{\Omega} V_{\Omega}\transpose$ is the singular value decomposition of matrix $\Omega$, $\Theta_1 := I_{Nm} - \mathbf{D}(\mathbf{D}\transpose M \mathbf{D})^{-1} \mathbf{D}\transpose M$, $\Theta_2 := \mathbf{D}(\mathbf{D}\transpose M \mathbf{D})^{-1} \mathbf{D}\transpose \mathbf{H}_{\mathbf{u}}\transpose \mathbf{Q} \mathbf{\Gamma}$, $\Theta_3 := \mathbf{B}_{\mathbf{N}}\transpose \mathcal{G}_{N:0}\spr{-1} \Phi_{N:0}$, $\Theta_4:= \Theta_1 \Theta_3 + \Theta_2$, $\Theta_5 := \mathbf{\Gamma} - \mathbf{H_u} \Theta_4$.
Furthermore, the optimal value of the objective function is given as:
\begin{align}
    J_{\mathrm{cov}}\spr{\star} = & ~\tr{ \mathbf{R} \left( \Theta_1 \mathcal{Z} \Theta_1\transpose + \Theta_4 \Sigma_0 \Theta_4\transpose \right) } \nonumber \\ 
    & + \tr{ \mathbf{Q} \left( \mathbf{H_u} \Theta_1 \mathcal{Z} \Theta_1\transpose \mathbf{H}_\mathbf{u}\transpose + \Theta_5 \Sigma_0 \Theta_5\transpose \right) } \nonumber \\
    & - 2 \lVert \Omega \rVert_{*},
\end{align}
where $ \mathcal{Z} := \mathbf{B}_{\mathbf{N}}\transpose \mathcal{G}_{N:0}\spr{-1} \Sigma_d \mathcal{G}_{N:0}\spr{-1} \mathbf{B_N} $.
\end{proposition}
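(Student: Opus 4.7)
The plan is to reparametrize the feasible set so that the non-convex quadratic equality constraint \eqref{eq:Gaussian-CS-constr-cov} is turned into an orthogonality constraint on an auxiliary matrix, and then to eliminate the unconstrained variables first and tackle the orthogonality constraint last.

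First, since $\Sigma_0, \Sigma_d \in \S{n}\spr{++}$, a matrix $P \in \R{n\times n}$ satisfies $\Sigma_d = P \Sigma_0 P\transpose$ if and only if $P = \Sigma_d\spr{1/2} T \Sigma_0\spr{-1/2}$ for some $T$ with $T T\transpose = I_n$. Applying this to $P = \Phi_{N,0} + \mathbf{B_N}\mathbf{L}$, constraint \eqref{eq:Gaussian-CS-constr-cov} becomes the linear equation
\begin{equation*}
\mathbf{B_N}\mathbf{L} = \Sigma_d\spr{1/2}\, T\, \Sigma_0\spr{-1/2} - \Phi_{N,0}, \qquad T T\transpose = I_n.
\end{equation*}
Under Assumption \ref{assumption:linear-controllable}, $\mathbf{B_N}$ has full row rank, so its right inverse is $\mathbf{B_N}\transpose \mathcal{G}_{N:0}\spr{-1}$, and the null space of $\mathbf{B_N}$ is spanned by the columns of $\mathbf{D}$. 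Hence every feasible $\mathbf{L}$ can be written as $\mathbf{L} = \mathbf{h}(T) + \mathbf{D} Z$, where $\mathbf{h}(T) = \mathbf{B_N}\transpose \mathcal{G}_{N:0}\spr{-1}(\Sigma_d\spr{1/2} T \Sigma_0\spr{-1/2} - \Phi_{N,0})$ and $Z \in \R{(mN-n) \times n}$ is free.

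Next, I substitute this parametrization into $J_{\mathrm{cov}}$ in \eqref{eq:cov-steering-objective}. Collecting terms in $\Sigma_0\spr{1/2}$, the objective becomes
\begin{equation*}
J_{\mathrm{cov}} = \tr{ M\, \mathbf{L}\Sigma_0\mathbf{L}\transpose } + 2\,\tr{ \mathbf{H_u}\transpose \mathbf{Q}\mathbf{\Gamma}\,\Sigma_0 \mathbf{L}\transpose } + \tr{ \mathbf{Q}\mathbf{\Gamma}\Sigma_0\mathbf{\Gamma}\transpose },
\end{equation*}
with $M = \mathbf{R} + \mathbf{H_u}\transpose \mathbf{Q}\mathbf{H_u}$. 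For each fixed $T$, this is an unconstrained strictly convex quadratic in $Z$ (since $M \succ 0$ and $\mathbf{D}$ has full column rank). Setting the gradient to zero yields
\begin{equation*}
Z\spr{\star}(T) = -(\mathbf{D}\transpose M \mathbf{D})\spr{-1} \mathbf{D}\transpose\bigl( M\,\mathbf{h}(T) + \mathbf{H_u}\transpose \mathbf{Q}\mathbf{\Gamma} \bigr),
\end{equation*}
which recovers the stated expression for $Z$ after noting that the $\Phi_{N,0}$ contribution in $\mathbf{h}(T)$ combines with $\mathbf{H_u}\transpose \mathbf{Q}\mathbf{\Gamma}$ to form $\Theta_2 - \Theta_1 \Theta_3$-type terms. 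Substituting $Z\spr{\star}(T)$ back, the reduced objective decomposes as $J_{\mathrm{cov}} = C_0 + f(T)$, where $C_0$ collects all $T$-independent traces and $f(T)$ is linear in $T$ of the form $f(T) = 2\,\tr{ \Omega T }$ for the matrix $\Omega$ defined in the statement. The auxiliary matrices $\Theta_1,\dots,\Theta_5$ arise naturally as the projector $I - \mathbf{D}(\mathbf{D}\transpose M \mathbf{D})\spr{-1}\mathbf{D}\transpose M$ and the closed-loop maps from $\Sigma_d\spr{1/2} T \Sigma_0\spr{-1/2}$ to the input and state responses.

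Finally, I minimize $\tr{\Omega T}$ over $T T\transpose = I_n$, a classical orthogonal Procrustes problem. Computing the SVD $\Omega = U_{\Omega} \Lambda_{\Omega} V_{\Omega}\transpose$ and using von Neumann's trace inequality, the minimum value is $-\tr{\Lambda_{\Omega}} = -\lVert \Omega \rVert_{*}$, attained by $T\spr{\star} = -V_{\Omega} U_{\Omega}\transpose$. Plugging $T\spr{\star}$ back into $\mathbf{h}(T)$ gives the claimed $\mathbf{h}$, and the $C_0$ term combined with $-2\lVert \Omega\rVert_{*}$ yields the stated $J_{\mathrm{cov}}\spr{\star}$. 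The main obstacle is the bookkeeping in step three: verifying that after eliminating $Z$, the $T$-dependent piece really collapses to a single linear trace with exactly the matrix $\Omega$ defined above, rather than an affine-quadratic expression in $T$ (this works because $T\transpose T = I_n$ makes the apparent quadratic term $\tr{\cdot\, T\transpose T\,\cdot}$ constant).
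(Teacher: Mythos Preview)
Your proposal is correct and follows essentially the same route as the paper: reparametrize the constraint via an orthogonal matrix $T$, decompose $\mathbf{L}=\mathbf{h}(T)+\mathbf{D}Z$ using the null space of $\mathbf{B_N}$, eliminate $Z$ by unconstrained quadratic minimization, and then solve the residual orthogonal Procrustes problem for $T$ via the von Neumann trace inequality. Your closing remark that the apparent quadratic term in $T$ collapses to a constant because $T T\transpose = I_n$ is exactly the mechanism the paper relies on (though it states it less explicitly).
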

\begin{proof}
Observe that the constraint \eqref{eq:Gaussian-CS-constr-cov} can be equivalently written as follows:
\begin{subequations}
\begin{align}\label{eq:prop-opt-cov-steer-rewritten-constr-cov}
    T = \Sigma_d^{-1/2} (\Phi_{N:0} + \mathbf{B_N} \mathbf{L}) \Sigma_0^{1/2}, ~~  T T\transpose = I_{n}.
\end{align}
Also, pick a full-rank matrix $\mathbf{D} \in \R{mN \times mN - n}$ such that $\mathbf{B_N} \mathbf{D} = \bm{0}$. 
Using $\mathbf{D}$, we can write $\mathbf{L} = \mathbf{B}_{\mathbf{N}}\transpose Y + \mathbf{D} Z$ where $Y\in \R{n \times n}$ and $Z \in \R{Nm - n \times n}$. 
Note that there is a one to one mapping between $\mathbf{L}$ and the pair $Y, Z$ since both $\mathbf{B_N}$ and $\mathbf{D}$ are full rank and they have orthogonal columns. 
Thus, we can rewrite \eqref{eq:prop-opt-cov-steer-rewritten-constr-cov} as $T =  \Sigma_d^{-1/2} (\Phi_{N:0} + \mathcal{G}_{N:0}Y)\Sigma_0$. 
Thus, we obtain:
\begin{equation}\label{eq:LofTandZ}
    \mathbf{L} = \mathbf{B}_N\transpose \mathcal{G}_{N:0}\spr{-1} \left( \Sigma_d\spr{1/2} T \Sigma_0\spr{-1/2} - \Phi_{N:0}\right) + \mathbf{D} Z.
\end{equation}
By plugging the right-hand-side of \eqref{eq:LofTandZ} into $\mathbf{L}$ in $J_\mathrm{cov}(\mathbf{L})$, we can recast the optimization problem in \eqref{eq:Cov-Steering-Problem} as follows:
\begin{align}
    \min_{T, Z} & ~~ J_1(T, Z) \\
    \text{s.t.} & ~~ T T\transpose = I_{n}
\end{align}
where $J_1(T, Z) = J_{\mathrm{cov}} \big( \mathbf{B}_N\transpose \mathcal{G}_{N:0}\spr{-1} \big( \Sigma_d\spr{1/2} T \Sigma_0\spr{-1/2} - \Phi_{N:0} \big) + \mathbf{D} Z \big)$. 
Note that the objective function $J_1(T, Z)$ is a convex quadratic function. 
For a fixed $T$, $\min_{Z} J_1(T, Z)$ is an unconstrained convex quadratic program. 
Thus, the optimal $Z$ can be found by solving the equation $\nabla_Z J_1(T, Z) = \bm{0}$; it follows that 
\begin{align}
    Z^{\star}(T) = - (\mathbf{D}\transpose M \mathbf{D})^{-1} \mathbf{D}\transpose (M h(T) + \mathbf{H}_{\mathbf{u}}\transpose \mathbf{Q} \mathbf{\Gamma} )
\end{align}
where $h(T) = \mathbf{B_N} \mathcal{G}_{N:0}\spr{-1} (\Sigma_d\spr{1/2} T \Sigma_0^{-1/2} - \Phi_{N,0})$, $Z\spr{\star}(T)$ denotes the optimal $Z$ for a fixed $T$. 
By plugging the expression of $Z\spr{\star}(T)$ back into $J_{1}(T, Z)$, we obtain the following optimization problem:
\begin{align}
    \min_{T} & ~~ J_2(T) \\
    \text{s.t.} & ~~ T T\transpose = I_n 
\end{align}
where $J_2(T) = J_1(T, Z\spr{\star}(T))$.
Expanding $J_2(T)$, we obtain that $J_2(T) = C + 2 \tr{\Omega T} $ where the constant term $C = \tr{ \mathbf{R} \left( \Theta_1 \mathcal{Z} \Theta_1\transpose + \Theta_4 \Sigma_0 \Theta_4\transpose \right) } + \tr{ \mathbf{Q} \left( \mathbf{H_u} \Theta_1 \mathcal{Z} \Theta_1\transpose \mathbf{H}_\mathbf{u}\transpose + \Theta_5 \Sigma_0 \Theta_5\transpose \right) }$. 
Finally, from Von Neuman trace inequality \cite{p:mirsky1975vonNeumanTrace}, we obtain that $T^{\star} = \arg \min_{T T\transpose = I_n} \tr{\Omega T}$ is given as $- V_{\Omega} U_{\Omega}\transpose$ and $\tr{\Omega T\spr{\star}} = \tr{\Lambda_{\Omega}} = \sum_i \sigma_i(\Omega) = \lVert \Omega \rVert_{*}$. 
\end{subequations}
\end{proof}

\begin{remark}
Note that in the special case in which $\mathbf{Q} = 0$, the problem in \eqref{eq:Cov-Steering-Problem} is equivalent to the covariance steering problem that is studied in \cite{p:goldshtein2017finitehorizonCS}.
Thus, the optimal policy derived in Proposition \ref{prop:optimal-covariance-steering} is equal to the optimal policy defined in \cite[Eq. (20)]{p:goldshtein2017finitehorizonCS}.
\end{remark}

\section{Reduction to Linear Program}\label{s:reduce-to-lp}
To establish a computationally efficient problem formulation, we define a finite-dimensional set of randomized policies. 
This approach enables us to recast Problem \ref{prob:main-problem-definition} as a linear program. 
First, we characterize the set of policies that can transform the initial Gaussian mixture model corresponding to the initial state of the system into another (target) Gaussian mixture model. 
Then, we formulate a finite-dimensional nonlinear program using the Gaussian mixture steering policies. 
Next, we reduce the resulting nonlinear program into a linear program by applying the results presented in Proposition \ref{prop:optimal-mean-steering} and Proposition \ref{prop:optimal-covariance-steering}.

\subsection{GMM Steering Policies}\label{ss:gmm-steering-policies}
To solve Problem \ref{prob:main-problem-definition} efficiently, we propose a set of randomized state feedback policies $\Pi_{r} \subset \Pi$.
Every $\pi \in \Pi_r$ is a sequence of functions $\curly{\pi_{0}, \pi_{1} , \dots, \pi_{N-1} }$ such that each $\pi_{k} : \R{n} \rightarrow \mathcal{P} (\R{m})$ is given as follows:
\begin{align}\label{eq:gmm-policy-definition}
    \pi_{k} (x_0) =  L_k^{i, j} (x_0 - \Bar{\mu}^{i}_0) + \Bar{u}_k^{i, j} ~~ \text{w.p.} ~~ \gamma_{i, j}(x_0)
\end{align}
where $L_{k}\spr{i, j} \in \R{m \times n}$, $\Bar{u}_k\spr{i, j} \in \R{m}$ for all $k \in \curly{0, \dots, N-1}, i \in \curly{0, \dots, r-1}$ and $j \in \curly{0, \dots, t-1}$.
Furthermore, $\gamma_{i, j} : \R{n} \rightarrow \R{} $ is given by:
\begin{align}\label{eq:gamma-def}
\gamma_{i, j} (x_0) = \frac{p_i \P[\mathcal{N}]{x_0; \Bar{\mu}_0\spr{i}, \Bar{\Sigma}_0\spr{i}}}{\sum_{i=0}\spr{r-1} p_i \P[\mathcal{N}]{x_0 ; \Bar{\mu}_0^{i}, \Bar{\Sigma}_0\spr{i}}}\lambda_{i, j}
\end{align}
where $\sum_{j} \lambda_{i, j} = 1$ for all $i$ and $\lambda_{i, j}, \geq 0$.
Note that, $\sum_{i, j} \gamma_{i,j} = 1$ and $\gamma_{i, j} \geq 0$, and the policy $\pi$ will return a valid probability distribution over the control sequences for a given $x_0$.

The policy defined in \eqref{eq:gmm-policy-definition} is based on the intuition that under affine state feedback policies, the state distribution $x_k$ will remain Gaussian, provided that the initial state distribution is also Gaussian. 
The term $\frac{p_i \P[\mathcal{N}]{x_0; \Bar{\mu}_0\spr{i}, \Bar{\Sigma}0\spr{i}}}{\sum_{i=0}\spr{r-1} p_i \P[\mathcal{N}]{x_0 ; \Bar{\mu}_0^{i}, \Bar{\Sigma}_0\spr{i}}}$ represents the likelihood that the initial state $x_0$ will be drawn from the $i$th component of the Gaussian mixture model with components $( \curly{p_i, \Bar{\mu}_i, \Bar{\Sigma}_i}\sub{i=0}\spr{r-1})$.
Consequently, the control policy in \eqref{eq:gmm-policy-definition} is an ideal choice for steering probability distributions defined by Gaussian mixture models.
The following proposition states that if a policy $\pi \in \Pi_r$ is applied to the dynamical system in \eqref{eq:linear-system-eq} whose initial state is sampled from $\mathbf{GMM}(\curly{p_i, \mu_i, \Sigma_i}_{i=0}\spr{r-1})$, then the terminal state $x_N$ will have the Gaussian mixture distribution $\mathbf{GMM}(\curly{q_i, \mu_i\spr{f}, \Sigma_i\spr{f}}_{i=0}\spr{t-1})$ whose parameters are determined by the policy $\pi \in \Pi_r$.
\begin{proposition}\label{prop:gmm-steering-proposition}
Let $x_0 \in \R{n}$ be the initial state of the system given in \eqref{eq:linear-system-eq} such that $x_0 \sim \mathbf{GMM}(\curly{p_i, \mu_i\spr{0}, \Sigma_i\spr{0}}_{i=0}\spr{r-1})$ and $\pi \in \Pi_r$ with parameters $(\curly{\Bar{\mu}_i, \Bar{\Sigma}_i}_{i=0}\spr{r-1}, \curly{\lambda_{i, j}}_{i=0, j=0}\spr{r-1, t-1}, \curly{\Bar{u}\sub{k}\spr{i,j}, L\sub{k}\spr{i,j}}_{i=0, j=0, k=0}\spr{r-1, t-1, N-1})$ and
$\mu_i\spr{0} = \Bar{\mu}_{i}$, $\Sigma_i\spr{0} = \Bar{\Sigma}_i$ for all $i \in \curly{0, \dots, r-1}$. 
Furthermore, let $u_k = \pi_k(x_0)$ for all $k \in \curly{0, \dots, N-1}$, then $x_N \sim \mathbf{GMM}(\curly{q_j, \mu_j\spr{f}, \Sigma_j\spr{f}})$ such that 
\begin{subequations}
\begin{align}
    q_j & = \sum_{i=0}\spr{r-1} p_i \lambda_{i, j} \label{eq:prop-gmm-steer-eq1}, \\
    \mu\spr{f}_{j} & = \Phi_{N:0} \mu_{i}\spr{0} + \mathbf{B_N} \mathbf{\Bar{U}}\sub{i, j},  \label{eq:prop-gmm-steer-eq2} \\
    \Sigma\spr{f}_{j} & = (\Phi_{N:0} + \mathbf{B_N} \mathbf{L}\sub{i, j}) \Sigma_{i}\spr{0} (\Phi_{N:0} + \mathbf{B_N} \mathbf{L}\sub{i, j})\transpose, \label{eq:prop-gmm-steer-eq3}
\end{align}
\end{subequations}
where \eqref{eq:prop-gmm-steer-eq1}-\eqref{eq:prop-gmm-steer-eq3} hold  for all $j \in \curly{0, \dots, t-1}$, $\mathbf{L}_{i, j} = \vertcat{L_0\spr{i, j}, \dots L_{N-1}^{i,j}}$ and $\mathbf{\Bar{U}}_{i, j} = \vertcat{\Bar{u}_0\spr{i, j}, \dots, \Bar{u}_{N-1}\spr{i, j}}$.
\end{proposition}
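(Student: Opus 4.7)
The plan is to decouple the randomization from the linear-Gaussian dynamics by introducing two discrete latent variables $I \in \curly{0,\dots,r-1}$ and $J \in \curly{0,\dots,t-1}$ whose joint law reproduces the action of the randomized policy. First I would invoke the standard generative interpretation of a Gaussian mixture: sample $I$ with $\P{I = i} = p_i$ and, conditional on $I = i$, draw $x_0 \sim \mathcal{N}(\mu_i^0, \Sigma_i^0)$. Marginalizing over $I$ recovers the initial GMM exactly, so this augmentation does not alter the joint distribution of the trajectory.

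Next, Bayes' rule gives the posterior $\P{I=i \mid x_0} = p_i \P[\mathcal{N}]{x_0;\mu_i^0,\Sigma_i^0} / \sum_k p_k \P[\mathcal{N}]{x_0;\mu_k^0,\Sigma_k^0}$, which under the hypothesis $\mu_i^0 = \Bar{\mu}_i$ and $\Sigma_i^0 = \Bar{\Sigma}_i$ is precisely the first factor of $\gamma_{i,j}(x_0)$. I would then define $J$ by $\P{J=j \mid I=i, x_0} = \lambda_{i,j}$ (depending only on $I$), so that $\P{I=i, J=j \mid x_0} = \gamma_{i,j}(x_0)$. Thus the randomized policy is equivalent to drawing the pair $(I,J)$ once at time zero and applying the deterministic affine law $u_k = L_k^{I,J}(x_0 - \mu_I^0) + \Bar{u}_k^{I,J}$ for all $k$.

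Conditional on $(I,J) = (i,j)$, we then have $x_0 \sim \mathcal{N}(\mu_i^0, \Sigma_i^0)$ and the stacked input becomes $\mathbf{U} = \mathbf{L}_{i,j}(x_0 - \mu_i^0) + \mathbf{\Bar{U}}_{i,j}$. Substituting into $\mathbf{X} = \mathbf{\Gamma} x_0 + \mathbf{H_u U}$ from \eqref{eq:concat-dynamics} and extracting the last block yields $x_N = \Phi_{N,0} x_0 + \mathbf{B_N}(\mathbf{L}_{i,j}(x_0 - \mu_i^0) + \mathbf{\Bar{U}}_{i,j})$, which is an affine transformation of a Gaussian vector and hence Gaussian with the mean and covariance displayed in \eqref{eq:prop-gmm-steer-eq2}--\eqref{eq:prop-gmm-steer-eq3}. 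Marginalizing by the law of total probability, the density of $x_N$ becomes a sum of $rt$ Gaussian components $\sum_{i,j} p_i \lambda_{i,j} \P[\mathcal{N}]{\cdot\,; \mu_{i,j}^f, \Sigma_{i,j}^f}$, and grouping these components by the $j$ index recovers the aggregate weights $q_j = \sum_i p_i \lambda_{i,j}$ asserted in \eqref{eq:prop-gmm-steer-eq1}.

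The main subtlety, and the step I expect to require the most care, is justifying that the per-step randomization at times $k = 0, \dots, N-1$ can be coupled into a single latent draw at time zero: because every $\pi_k$ depends only on $x_0$ and the selection law $\gamma_{i,j}(x_0)$ does not depend on $k$, this coupling preserves the marginal law of $(x_0, u_0, \dots, u_{N-1})$, but without it the control sequence would not be an affine function of $x_0$ and the conditional law of $x_N$ would fail to be Gaussian. Once this coupling is made explicit, the remainder is a direct computation using Bayes' rule and the linear-Gaussian propagation formulas already recorded in Section \ref{s:optimal-cs-for-gaussian}.
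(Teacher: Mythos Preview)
Your latent-variable argument is the paper's proof in cleaner clothing: the paper writes $\P[x_N]{\cdot}$ as a nested integral with Dirac deltas (equations \eqref{eq:pdf-x-n}--\eqref{eq:prop-gmm-proof-gij-3}) and relies on exactly the cancellation you obtain via Bayes' rule, namely $\gamma_{i,j}(\Hat{x}_0)\,\P[x_0]{\Hat{x}_0} = p_i\lambda_{i,j}\,\P[\mathcal{N}]{\Hat{x}_0;\mu_i^0,\Sigma_i^0}$, after which the linear pushforward of each Gaussian component is computed by a change of variables that mirrors your affine-of-Gaussian step.

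One correction to the subtlety you flag at the end: the claim that coupling the per-step draws into a single draw ``preserves the marginal law of $(x_0,u_0,\dots,u_{N-1})$'' is not right if the draws are genuinely independent across $k$. In that case the conditional law of $\mathbf{U}\mid x_0$ would be a product of mixtures with $(rt)^N$ atoms rather than the $rt$-atom mixture you need, and $x_N$ would not be the stated $t$-component GMM. The paper does not argue around this; it simply \emph{defines} the policy so that one draw of $(i,j)$ governs the entire horizon---see its expression for $\P[U\mid x_0=\Hat{x}_0]{\Hat{U}}$ in \eqref{eq:conditional-U-given-x0}, which already has only $rt$ atoms. You should adopt that reading of $\Pi_r$ as a hypothesis rather than try to derive it as an equivalence.
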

\begin{proof}
First, note that, by virtue of Bayes' Theorem on conditional probability densities, the probability density function of $x_N$ can be written as follows:
\begin{align}\label{eq:pdf-x-n}
    \P[x_N]{\Hat{x}} & =  \int_{\R{n}} \int_{\R{m N}}  \P[x_N | x_0 = \Hat{x}_0, U = \Hat{U}]{\Hat{x}} ~ \P[U| x_0=\Hat{x}]{\Hat{U}}  \nonumber \\
    & \qquad \qquad \qquad \times\P[x_0]{\Hat{x}_0 }  \, \mathrm{d}\Hat{U} \, \mathrm{d}\Hat{x}_0 .
\end{align}
Furthermore, the conditional probability density functions in \eqref{eq:pdf-x-n} are given as follows:
\begin{align}
    & \P[x_N | x_0  = \Hat{x}_0, U=\Hat{U}]{\Hat{x}} = \delta (\Hat{x} = \Phi_{N:0} \Hat{x}_0 + \mathbf{B_N} \Hat{U}) \label{eq:conditional-xn-given-x0-u} \\
    & \qquad ~ \, \P[U | x_0 = \Hat{x}_0]{\Hat{U}} = \nonumber \\ 
    &  \quad \sum_{i=0}\spr{r-1} \sum_{j=0}\spr{t-1} \gamma_{i, j} (\Hat{x}_0) \delta ( \Hat{U} = \mathbf{L}_{i, j} (\Hat{x}_0 - \mu_i\spr{0}) + \mathbf{\Bar{U}}\sub{i, j} )  \label{eq:conditional-U-given-x0}
\end{align}
where $\delta(\cdot)$ denotes the dirac delta function, \eqref{eq:conditional-xn-given-x0-u} can be obtained since system dynamics in \eqref{eq:linear-system-eq} and \eqref{eq:conditional-U-given-x0} can be obtained from the definition of the policy set $\Pi_r$ in \eqref{eq:gmm-policy-definition}. 
Let us analyze the inner integral in \eqref{eq:pdf-x-n} first. 
Observe that $\P[x_0]{\Hat{x}_0}$ does not depend on $\Hat{U}$ and thus it can be taken out of the inner integral. 
Plug the expressions in \eqref{eq:conditional-xn-given-x0-u} and \eqref{eq:conditional-U-given-x0} into \eqref{eq:pdf-x-n} to obtain:
\begin{align}
    & = \int_{\R{mN}} \delta (\Hat{x} = \Phi_{N:0} \Hat{x}_0 + \mathbf{B_N}\Hat{U}) \times \nonumber \\ 
    & \quad \sum_{i=0, j=0}\spr{r-1, t-1} \gamma_{i, j} (\Hat{x}_0) \delta ( \Hat{U} = \mathbf{L}_{i, j} (\Hat{x}_0 - \mu_i\spr{0}) + \mathbf{\Bar{U}}\sub{i, j} ) \, \mathrm{d}\Hat{U}, \\
    & = \sum_{i=0, j=0}\spr{r-1, t-1} \gamma_{i, j} (\Hat{x}_0) \delta \big(\Hat{x} = \Phi_{N:0} \Hat{x}_0 \nonumber \\
    & \qquad \qquad \qquad \qquad + \mathbf{B_N} (\mathbf{L}_{i,j} (\Hat{x}_0 - \mu_i\spr{0}) + \mathbf{\Bar{U}}_{i, j}) \big). \label{eq:inner-integral-resolved}
\end{align}
Equation~\eqref{eq:inner-integral-resolved} is obtained by using the linearity of the integral operator and the properties of the dirac delta function. The expression in \eqref{eq:inner-integral-resolved} can be rewritten as $\sum_{i, j} \gamma_{i, j}(\Hat{x}_0) \delta (\mathbf{H}_{i, j} \Hat{x}_0 - h_{i, j} = \Hat{x})$ for brevity, where $\mathbf{H}_{i, j} := \Phi_{N:0} + \mathbf{B_N} \mathbf{L}_{i, j}$ and $h_{i, j} :=  \mathbf{B_N} (\mathbf{L}_{i, j} \mu_{i}\spr{0} - \mathbf{\Bar{U}}_{i, j})$. Observe that the denominator of $\gamma_{i, j}(\Hat{x}_0)$ defined in \eqref{eq:gamma-def} is equal to $\P[x_0]{\Hat{x}_0}$. By using this fact, it follows that 
\begin{align}\label{eq:prop-gmm-proof-PxN}
    & \P[x_N]{\Hat{x}} = \sum_{i=0}\spr{r-1} \sum_{j=0}\spr{t-1} p_i \lambda_{i, j} g_{i, j} (\Hat{x}),
\end{align}
where $g_{i, j} (\Hat{x}) := \int_{\R{n}} \mathcal{N}(\Hat{x}_0; \Bar{\mu}_{0}\spr{i}, \Bar{\Sigma}_{0}\spr{i}) \delta(\mathbf{H}_{i, j} \Hat{x}_0 - h_{i, j} = \Hat{x})  \, d\Hat{x}_0$. 
Furthermore,  
\begin{align}
    \hspace{-0.15cm} g_{i, j} (\Hat{x}) & = \int_{\R{n}} \P[\mathcal{N}]{\mathbf{H}_{i, j}^{-1}(z_{i, j} + h_{i, j}) ; \Bar{\mu}_0\spr{i}, \Bar{\Sigma}_0\spr{i}} \nonumber \\
    & \qquad  \times \lvert \mathbf{H}_{i, j}\spr{-1} \rvert \delta (z_{i,j} = \Hat{x}) \, d z_{i, j}, \label{eq:prop-gmm-proof-gij-1}\\
    & = \P[\mathcal{N}]{\mathbf{H}_{i,j}\spr{-1} (\Hat{x} + h_{i,j});\Bar{\mu}_0\spr{i}, \Bar{\Sigma}_{0}\spr{i}}  |\mathbf{H}_{i, j}^{-1}|, \label{eq:prop-gmm-proof-gij-2}\\
    & = \P[\mathcal{N}]{\Hat{x}; \mathbf{H}_{i,j} \mu_{i}\spr{0} - h_{i,j}, \mathbf{H}_{i,j} \Sigma_i\spr{0} \mathbf{H}_{i,j}\transpose} , \label{eq:prop-gmm-proof-gij-3}
\end{align}
where $\lvert \mathbf{H}_{i,j} \rvert$ is the determinant of $\mathbf{H}_{i, j}$.
Equation \eqref{eq:prop-gmm-proof-gij-1} is obtained by applying the variable transformation $z_{i, j} = \mathbf{H}_{i,j} \Hat{x}_0 - h_{i, j} $.
Then, the standard property of the dirac delta function yields \eqref{eq:prop-gmm-proof-gij-2}.
Expanding $\mathcal{N}(\mathbf{H}_{i,j}\spr{-1} (\Hat{x} + h_{i,j});\Bar{\mu}_0\spr{i}, \Bar{\Sigma}_{0}\spr{i} )$, we obtain \eqref{eq:prop-gmm-proof-gij-3}.
After that, we expand $\mathbf{H}_{i,j}$, $h_{i,j}$ and define \eqref{eq:prop-gmm-steer-eq2} and \eqref{eq:prop-gmm-steer-eq3} to obtain $g_{i,j}(\Hat{x}) := \mathcal{N}(\Hat{x}; \mu_j\spr{f}, \Sigma_j\spr{f})$.
Finally, the proof is concluded by plugging this expression into \eqref{eq:prop-gmm-proof-PxN} and defining \eqref{eq:prop-gmm-steer-eq1}.
\end{proof}

\begin{remark}
In the statement of Proposition \ref{prop:gmm-steering-proposition}, it is given that for each $j \in \curly{0, \dots, t-1}$ every $\mathbf{\Bar{U}}_{i, j}$ should satisfy \eqref{eq:prop-gmm-steer-eq2} and every $\mathbf{L}_{i, j}$ should satisfy \eqref{eq:prop-gmm-steer-eq3}. 
This seems like an extra condition for the policy to satisfy and limits the applicability of the policy proposed in \eqref{eq:gmm-policy-definition}. 
However, the number of terminal Gaussian components $t$ is actually determined by the parameters ${\mathbf{\Bar{U}}_{i, j}, \mathbf{L}_{i,j}}$.
To see this, take an arbitrary set $\curly{\mathbf{\Bar{U}}_\ell, \mathbf{L}_\ell }_{\ell=0}\spr{s}$ for each $i \in \curly{0, \dots, r-1}$ and $\ell \in \curly{0, \dots, s}$, define $\mu_{i\times s + \ell}\spr{f} = \Phi_{N:0} \mu_i\spr{0} + \mathbf{B_N} \mathbf{\Bar{U}}_{\ell}$ and $\Sigma_{i\times s + \ell}\spr{f} = (\Phi_{N:0} + \mathbf{B_N} \mathbf{L}_{\ell}) \Sigma_{i}\spr{0} (\Phi_{N:0} + \mathbf{B_N} \mathbf{L}_{\ell})\transpose$. 
Thus, we can set $t = s \times r$ to obtain \eqref{eq:prop-gmm-steer-eq2} and \eqref{eq:prop-gmm-steer-eq3}.
\end{remark}
\subsection{Reduction to LP}\label{ss:reduction-lp}
Since we have demonstrated that, within the set of policies $\Pi_r$ defined in \eqref{eq:gmm-policy-definition}, the initial Gaussian mixture state distribution is transformed into another Gaussian mixture, and these policies in $\Pi_r$ are parameterized by a finite number of decision variables, we can utilize the set of policies $\Pi_r$ to formulate a finite-dimensional optimization problem aimed at solving Problem \ref{prob:main-problem-definition}.

Minimizing Problem \ref{prob:main-problem-definition} over randomized policies defined in \eqref{eq:gmm-policy-definition} is written as a finite-dimensional nonlinear program in terms of decision variables $\curly{\lambda_{i,j}, \mathbf{\Bar{U}}_{i,j}, \mathbf{L}_{i,j}}_{i=0, j=0}\spr{r-1, t-1}$ as follows:
\begin{subequations}\label{eq:nlp-main-problem}
\begin{align}
    \min_{ \lambda_{i,j}, \mathbf{\Bar{U}}_{i,j}, \mathbf{L}_{i,j}} & ~ \mathcal{J}_{r} (\curly{\lambda_{i,j}, \mathbf{\Bar{U}}_{i,j}, \mathbf{L}_{i,j}}_{i=0, j=0}\spr{r-1, t-1}) \\
    \text{s.t.} & ~ \lambda_{i,j} \geq 0 \label{eq:nlp-main-problem-constr-0}\\
    & ~ \sum_{j=0}\spr{t-1} \lambda_{i,j} = 1,  \label{eq:nlp-main-problem-constr-1}\\
    & ~ \sum_{i=0}\spr{r-1} p_i\spr{0} \lambda_{i,j} = p_j\spr{d}, ~\,\, \forall j \in \curly{0, \dots, t-1} \label{eq:nlp-main-problem-constr-2}\\
    & ~ \mu_j\spr{d} = \Phi_{N:0} \mu_{i}\spr{0} + \mathbf{B_N} \mathbf{\Bar{U}}\sub{i, j} \label{eq:nlp-main-problem-constr-3}\\
    & ~ \Sigma_j\spr{d} = \mathbf{H}_{i,j} \Sigma_{i}\spr{0} \mathbf{H}_{i,j}\transpose \label{eq:nlp-main-problem-constr-4}
\end{align}
\end{subequations}
where $\mathbf{H}_{i,j} = \Phi_{N:0} + \mathbf{B_N} \mathbf{L}_{i,j}$, $\mathcal{J}_{r}(\curly{\lambda_{i,j}, \mathbf{\Bar{U}}_{i,j}, \mathbf{L}_{i,j}}) = \E{J(X_{0:N}, U_{0:N-1})}$.
Constraints in \eqref{eq:nlp-main-problem-constr-0} and \eqref{eq:nlp-main-problem-constr-1} are due to the parametrization of the control policy in \eqref{eq:gmm-policy-definition}. The constraints in \eqref{eq:nlp-main-problem-constr-2}, \eqref{eq:nlp-main-problem-constr-3} and \eqref{eq:nlp-main-problem-constr-4} are obtained by making the right hand side of the equalities in \eqref{eq:prop-gmm-steer-eq1}, \eqref{eq:prop-gmm-steer-eq2} and \eqref{eq:prop-gmm-steer-eq3} equal to $p_j\spr{d}$, $\mu\sub{j}\spr{d}$ and $\Sigma_j\spr{d}$, respectively.
Furthermore, constraints \eqref{eq:nlp-main-problem-constr-1} and \eqref{eq:nlp-main-problem-constr-2} are enforced for all $i \in \curly{0, \dots, r-1}$ and for all $j \in \curly{0, \dots, t-1}$, respectively. 
And the constraints \eqref{eq:nlp-main-problem-constr-0}, \eqref{eq:nlp-main-problem-constr-3} and \eqref{eq:nlp-main-problem-constr-4} are enforced for all $i, j$ in $\curly{0, \dots, r-1} \times \curly{0, \dots, t-1}$.
Objective function $\mathcal{J}_r (\cdot)$ can be rewritten in terms of the decision variables $\curly{\lambda_{i,j}, \mathbf{\Bar{U}}_{i,j}, \mathbf{L}_{i,j}}$ using the law of iterated expectations as:
\begin{align}\label{eq:nlp-main-problem-objective}
    \mathcal{J}_r (\cdot) := \sum_i \sum_j \lambda_{i,j} (J_\mathrm{mean}(\mathbf{\Bar{U}}_{i,j}) + J_\mathrm{cov}(\mathbf{L}_{i,j}))
\end{align}
where $J_\mathrm{mean}(\cdot)$ and $J_\mathrm{cov}(\cdot)$ are defined in \eqref{eq:mean-steering-objective} and \eqref{eq:cov-steering-objective}, respectively.
Note that the parameters $\curly{\Bar{\mu}_{i}\spr{0}, \Bar{\Sigma}_{i}\spr{0}}_{i=0}\spr{r-1}$ are also decision variables for the randomized policy in \eqref{eq:gmm-policy-definition}. 
However, we make them constant and equal to $\curly{\mu_i\spr{0}, \Sigma_i\spr{0}}$ in order to invoke Proposition \ref{prop:gmm-steering-proposition} and formulate the nonlinear program in \eqref{eq:nlp-main-problem}.

The non-convexity of the nonlinear program in \eqref{eq:nlp-main-problem} is due to the objective function $\mathcal{J}_r (\cdot)$ and the non-convex equality constraint \eqref{eq:nlp-main-problem-constr-4}. 
We observe that when $\lambda{i,j}$ is fixed, the objective function in \eqref{eq:nlp-main-problem-objective} becomes separable for each $(i,j)$ pair. 
These separated optimization problems for all $(i,j)$ are linear Gaussian covariance steering problems, and optimal policies for each can be found by invoking Proposition \ref{prop:optimal-mean-steering} and Proposition \ref{prop:optimal-covariance-steering}. 
The following theorem summarizes the main result of this section and describes how the optimal policy can be extracted from the solution to the linear program in \eqref{eq:main-lin-prog}.
\begin{theorem}\label{theorem:main-theorem}
Optimal parameters of the policy $\pi \in \Pi_r$ given in \eqref{eq:gmm-policy-definition} that solves Problem \ref{prob:main-problem-definition} can be obtained by solving the following linear program:
\begin{subequations}\label{eq:main-lin-prog}
\begin{align}
    \min_{\boldsymbol{\Tilde{\lambda}} \in \R{r\times t}} & ~~ \tr{ \mathbf{C}\transpose \boldsymbol{\Tilde{\lambda}} }  \label{eq:main-lin-prog-obj}\\
    \text{s.t.} & ~~ \boldsymbol{\Tilde{\lambda}} \mathbf{1}_t = \mathbf{p}_0, ~ \boldsymbol{\Tilde{\lambda}}\transpose \mathbf{1}_r = \mathbf{p}_d, ~ \boldsymbol{\Tilde{\lambda}} \geq 0, \label{eq:main-lin-prog-constr}
\end{align}
\end{subequations}
where $\mathbf{1}_{n}$ denotes the $n$-dimensional column vector whose entries are all equal to $1$ and the constraint $\boldsymbol{\Tilde{\lambda}} \geq 0$ is to be understood in the element-wise sense. In addition, $\mathbf{C}_{i,j} := \mathcal{J}\spr{\star}_{\mathrm{mean}}(\mu\spr{0}_{i}, \mu_{j}\spr{d}) + J\spr{\star}_{\mathrm{cov}}(\Sigma_i\spr{0}, \Sigma\spr{d}_{j} )$, where $\mathbf{C}_{i,j}$ is the entry at the $i$th row and the $j$th column of $\mathbf{C}$, $J_{\mathrm{mean}}\spr{\star}(\mu_i\spr{0}, \mu_j\spr{d})$ and $J_{\mathrm{cov}}\spr{\star}(\Sigma_i\spr{0}, \Sigma_j\spr{d})$ denote optimal mean and covariance steering costs between $x_0 \sim \mathcal{N}(\mu_i\spr{0}, \Sigma_i\spr{0})$ and $x_N \sim \mathcal{N} (\mu_j\spr{d}, \Sigma_j\spr{d})$. 
Furthermore, the optimal mixing weights $\lambda_{i,j}\spr{\star}$ of the policy $\pi \in \Pi_r$ are given by:
$\lambda_{i,j}\spr{\star} = \frac{\Tilde{\lambda}_{i,j}}{p_i\spr{0}}$,
where $\Tilde{\lambda}_{i,j}\spr{\star}$ is the optimal solution of the linear program in \eqref{eq:main-lin-prog}.
\end{theorem}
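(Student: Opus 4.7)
The plan is to reduce the nonlinear program in \eqref{eq:nlp-main-problem} to the linear program in \eqref{eq:main-lin-prog} by (i) performing the inner minimization over $\{\mathbf{\Bar{U}}_{i,j}, \mathbf{L}_{i,j}\}$ in closed form for each fixed collection of mixing weights $\lambda_{i,j}$, and (ii) rewriting the resulting outer problem via a linear change of variables that absorbs the weights $p_i^{(0)}$ into $\Tilde{\lambda}_{i,j}$.

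First, I would invoke the law of iterated expectations on the joint distribution of mode and branch. Since $x_0$ lies in the $i$-th component with probability $p_i^{(0)}$, and conditioned on that mode the randomized policy in \eqref{eq:gmm-policy-definition} selects branch $j$ with probability $\lambda_{i,j}$, the expected cost factors as $\mathcal{J}_r = \sum_{i,j} p_i^{(0)} \lambda_{i,j}\bigl( J_{\mathrm{mean}}(\mathbf{\Bar{U}}_{i,j}) + J_{\mathrm{cov}}(\mathbf{L}_{i,j}) \bigr)$. Crucially, the constraints \eqref{eq:nlp-main-problem-constr-3} and \eqref{eq:nlp-main-problem-constr-4} are decoupled across $(i,j)$ pairs: the decision variables $\mathbf{\Bar{U}}_{i,j}$ and $\mathbf{L}_{i,j}$ for a given pair are only coupled to the corresponding $(\mu_i^{(0)}, \Sigma_i^{(0)})$ and $(\mu_j^{(d)}, \Sigma_j^{(d)})$. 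Hence, for any fixed feasible $\{\lambda_{i,j}\}$, the inner problem separates into $r\cdot t$ independent Gaussian mean and covariance steering subproblems of the form \eqref{eq:Mean-Steering-Problem} and \eqref{eq:Cov-Steering-Problem}.

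Second, I would apply Proposition \ref{prop:optimal-mean-steering} and Proposition \ref{prop:optimal-covariance-steering} to each of these subproblems, obtaining closed-form optimizers and the optimal values $J_{\mathrm{mean}}^{\star}(\mu_i^{(0)}, \mu_j^{(d)})$ and $J_{\mathrm{cov}}^{\star}(\Sigma_i^{(0)}, \Sigma_j^{(d)})$. Defining $\mathbf{C}_{i,j}$ as their sum, the inner minimization collapses and the remaining outer problem becomes linear in $\lambda_{i,j}$, namely $\min \sum_{i,j} p_i^{(0)} \lambda_{i,j} \mathbf{C}_{i,j}$ subject to \eqref{eq:nlp-main-problem-constr-0}--\eqref{eq:nlp-main-problem-constr-2}.

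Finally, I would introduce the substitution $\Tilde{\lambda}_{i,j} = p_i^{(0)} \lambda_{i,j}$. Under this bijection (since all $p_i^{(0)} > 0$), non-negativity is preserved, the constraint $\sum_j \lambda_{i,j}=1$ becomes $\sum_j \Tilde{\lambda}_{i,j} = p_i^{(0)}$, i.e.\ $\boldsymbol{\Tilde{\lambda}} \mathbf{1}_t = \mathbf{p}_0$, the marginal constraint $\sum_i p_i^{(0)}\lambda_{i,j}=p_j^{(d)}$ becomes $\boldsymbol{\Tilde{\lambda}}\transpose \mathbf{1}_r = \mathbf{p}_d$, and the objective becomes $\sum_{i,j} \mathbf{C}_{i,j} \Tilde{\lambda}_{i,j} = \tr{\mathbf{C}\transpose \boldsymbol{\Tilde{\lambda}}}$. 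Inverting the substitution recovers $\lambda_{i,j}^{\star} = \Tilde{\lambda}_{i,j}^{\star}/p_i^{(0)}$, and the optimal $\mathbf{\Bar{U}}_{i,j}^{\star}, \mathbf{L}_{i,j}^{\star}$ are the ones furnished by Propositions \ref{prop:optimal-mean-steering} and \ref{prop:optimal-covariance-steering}.

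The only subtle step is the separability argument: I must be careful to observe that constraints \eqref{eq:nlp-main-problem-constr-1} and \eqref{eq:nlp-main-problem-constr-2} couple only the $\lambda_{i,j}$ variables, not the $(\mathbf{\Bar{U}}_{i,j}, \mathbf{L}_{i,j})$ variables, so that the inner minimization truly decouples and Propositions \ref{prop:optimal-mean-steering}--\ref{prop:optimal-covariance-steering} apply unchanged to each $(i,j)$ subproblem. Everything else is bookkeeping through a linear substitution.
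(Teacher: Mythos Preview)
Your proposal is correct and follows essentially the same approach as the paper: the paper observes that for fixed $\lambda_{i,j}$ the objective \eqref{eq:nlp-main-problem-objective} separates across $(i,j)$ into independent Gaussian covariance-steering subproblems, invokes Propositions~\ref{prop:optimal-mean-steering} and~\ref{prop:optimal-covariance-steering} to collapse the inner minimization, and then omits the remaining algebra. Your write-up is in fact more careful than the paper's sketch---you correctly carry the weight $p_i^{(0)}$ through the iterated expectation (which the displayed \eqref{eq:nlp-main-problem-objective} drops) and spell out the substitution $\Tilde{\lambda}_{i,j} = p_i^{(0)}\lambda_{i,j}$ that converts \eqref{eq:nlp-main-problem-constr-0}--\eqref{eq:nlp-main-problem-constr-2} into the transportation-polytope constraints \eqref{eq:main-lin-prog-constr}.
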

The formal proof of Theorem \ref{theorem:main-theorem} is omitted since the reduction of Problem \ref{prob:main-problem-definition} to the linear program defined in \eqref{eq:main-lin-prog} has already been described in detail in Section \ref{ss:reduction-lp}.

It is worth mentioning that While our approach is designed for steering probability distributions defined as Gaussian mixture models over linear systems, it can also be extended to steer general probability distributions, leveraging the universal approximation property of GMMs \cite{b:stergiopoulos2017advSignalProHandbook}. 
To approximate general distributions as GMMs, one can employ the expectation-maximization algorithm \cite{p:expectationMaximization} since these class of algorithms are efficiently implemented in open-source packages such as \cite{p:scikit-learn}.

\section{Numerical Experiments}\label{s:numerical-experiments}
In this section, we present the results of our numerical experiments. 
All numerical computations were conducted on a Mac M1 with 8GB of RAM. 
The first numerical example in Section \ref{ss:num-ex-gmm} is similar to the one solved in \cite{p:debadyn2021discreteTimeLQRviaOT}, for comparison purposes.
The second example in Section \ref{ss:num-exp-integrator} is provided to illustrate that our approach can be used to steer general probability distributions. 

\subsection{2-D LQR with GMM Distributions}\label{ss:num-ex-gmm}
The parameters of the system dynamics and the cost function are taken from Example V-A in \cite{p:debadyn2021discreteTimeLQRviaOT}.
In this example, $x_k \in \R{2}$ and $u_k \in \R{1}$ for all $k \in \curly{0, \dots, N}$ where $N = 10$.
\begin{align*}
    A_k = 
    \begin{bmatrix} 
    0.9 & -0.1 \\
    -0.1 & 0.8
    \end{bmatrix}, 
    \qquad
    B_k = 
    \begin{bmatrix}
        1 \\
        0
    \end{bmatrix}.
\end{align*}
Furthermore, $Q_k = I_2$, $x_k' = [0.0, 0.0]\transpose$ and $R_k = 1$. Also, initial and desired state distributions are given as $x_0 \sim \mathbf{GMM}(\curly{p\spr{d}_i, \mu_i\spr{0}, \Sigma_i\spr{0}}_{i=0}\spr{1})$ and $x_N \sim \mathbf{GMM}(\curly{p_i\spr{d}, \mu_i\spr{d}, \Sigma_i\spr{d}}_{i=0}\spr{1})$ where $p_0\spr{0} = 0.8$, $p_1\spr{0} = 0.2$, $\mu_0\spr{0} = [-0.5, -0.6]\transpose$, $\mu_1\spr{0} = [0.0, 0.0]\transpose$, $\Sigma_0\spr{0} = 0.02 I_2$ , $\Sigma_1\spr{0} = \left[ \begin{smallmatrix} 0.02 & 0.0 \\ 0.0 & 0.04 \end{smallmatrix} \right]$, $p_0\spr{d} =0.5$, $p_{0}\spr{d} =0.5$, $\mu_{0}\spr{d} = [0.5, 0.5]\transpose$, $\mu_1\spr{d}= [0.6, -0.6]\transpose$, $\Sigma_0\spr{d} = 0.02 I_2 $, $\Sigma_1\spr{d} = \left[ \begin{smallmatrix} 0.02 & 0.0 \\ 0.0 & 0.01 \end{smallmatrix} \right]$.
In Figure \ref{fig:initial-desired-distributions}, we illustrate the initial and desired distributions.  
\begin{figure}[ht]
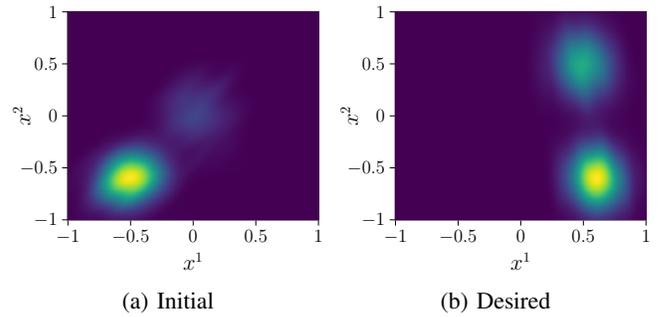

    \centering
    \begin{subfigure}{0.49\linewidth}
    \includestandalone[width=\linewidth]{example1/initial}
    \subcaption{Initial}
    \end{subfigure}
    \begin{subfigure}{0.49\linewidth}
    \includestandalone[width=\linewidth]{example1/desired}
    \subcaption{Desired}
    \end{subfigure}
    \caption{Initial and desired state distributions}
    \label{fig:initial-desired-distributions}
\end{figure}

In our experiment, the optimal policy required only 0.13 seconds for computation. 
This is notably faster than the reported computation time of 12 seconds for the same problem parameters in \cite{p:debadyn2021discreteTimeLQRviaOT}. 
The optimal $\boldsymbol{\lambda\spr{\star}} = \left[ \begin{smallmatrix}
0.25 & 0.75 \\ 1.0 & 0.0
\end{smallmatrix} \right]$, indicating that 3/4 of the probability mass in the first component of the initial probability distribution is transferred to the second component of the desired distribution. 
To illustrate the evolution of the densities, we collected 1000 samples from the initial state distribution and applied the optimal policy to these samples. 
The evolution of the probability density of the state over time is depicted in Figure \ref{fig:density-evolution}.
\begin{figure}
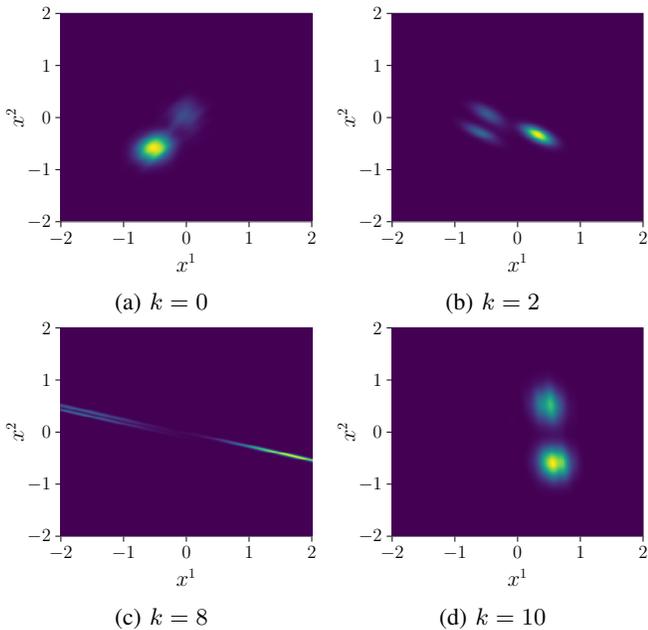

    \centering
    \begin{subfigure}{0.49\linewidth}
    \includestandalone[width=\linewidth]{example1/t0}
    \subcaption{$k = 0$}
    \end{subfigure}
    \hfill
    \begin{subfigure}{0.49\linewidth}
    \includestandalone[width=\linewidth]{example1/t2}
    \subcaption{$k = 2$}
    \end{subfigure}
    \begin{subfigure}{0.49\linewidth}
    \includestandalone[width=\linewidth]{example1/t8}
    \subcaption{$k = 8$}
    \end{subfigure}
    \hfill
    \begin{subfigure}{0.49\linewidth}
    \includestandalone[width=\linewidth]{example1/t10}
    \subcaption{$k = 10$}
    \end{subfigure}
    \caption{Density Evolution}
    \label{fig:density-evolution}
\end{figure}

\subsection{Integrator on a Plane with General Densities}\label{ss:num-exp-integrator}
In this numerical experiment, we consider the system dynamics to be a single integrator on the $x-y$ plane with a discrete-time step of $\Delta t = 1.0$. 
The system matrices are defined as follows: $A_k = I_2$, $B_k = \Delta t I_2$, and $x_k, u_k \in \R{2}$ for all $k \in \curly{0, \dots, N}$, where $N = 10$. 
Additionally, the matrices associated with the cost function are specified as follows: $Q_k = \mathbf{0}$, $x_k' = \bm{0}$, and $R_k = I_2'$.

In this example, the initial state distribution is defined as the uniform distribution over the set $\mathcal{S}$, where $\mathcal{S}:= \curly{x, y \in \R{} ~|~ (x, y) \in [0, 5] \times [0, 5]}$. 
On the other hand, the desired state distribution is a uniform distribution over the shape of the letter ``T''. 
Both the initial and desired distributions are visualized in Figure \ref{fig:initial-desired-actual}. 
To represent the densities, we collected 10,000 samples from these distributions and employed kernel density estimation with a Gaussian kernel to generate a continuous probability density function.

Figure \ref{fig:initial-desired-approx} displays the approximated initial and desired state distributions, which were obtained using the expectation-maximization (EM) algorithm. 
In this visualization, we collected 10,000 samples from the approximated Gaussian mixture distributions and applied kernel density estimation, as shown in Figure \ref{fig:initial-desired-actual}. 
Since the EM algorithm iteratively updates the parameters of a GMM to maximize the log-likelihood of the samples, we gathered samples from both the initial and desired distributions. 
For the initial distribution, we collected 2000 samples, and the GMM was configured with $r = 6$ components. 
In contrast, we gathered 2000 samples for the desired distribution, which utilized a GMM with $t = 10$ components. 
For the implementation of EM algorithm and GMM approximations, we used Sckit-learn library \cite{p:scikit-learn}.
It's worth noting that the EM algorithm took 0.36 seconds and 0.08 seconds to converge to locally optimal parameter values. 

Figure \ref{fig:density-evolution-example2} provides a visual representation of the state distribution's evolution over time, specifically showcasing the state distribution at time steps $k = \curly{0, 4, 8, 10}$. 
It's noteworthy that the computation of the optimal policy requires 0.65 seconds. 
The linear program outlined in \eqref{eq:main-lin-prog} involves 50 decision variables, as defined in Section \ref{ss:num-exp-integrator}, allowing for quick solutions within milliseconds. 
The challenge arises in determining the values of the matrix $\mathbf{C}$ in \eqref{eq:main-lin-prog-obj}, which necessitates solving individual mean and covariance steering problems using Proposition \ref{prop:optimal-mean-steering} and Proposition \ref{prop:optimal-covariance-steering}. 
The main computational load stems from repeatedly calculating $(\mathbf{D}\transpose M \mathbf{D})^{-1}$ and $(\mathbf{B}\mathbf{N} M^{-1} \mathbf{B}\mathbf{N}\transpose)^{-1}$. To expedite the computation of $\mathbf{C}$, these matrices can be computed once and stored to avoid redundant calculations.

\begin{figure}
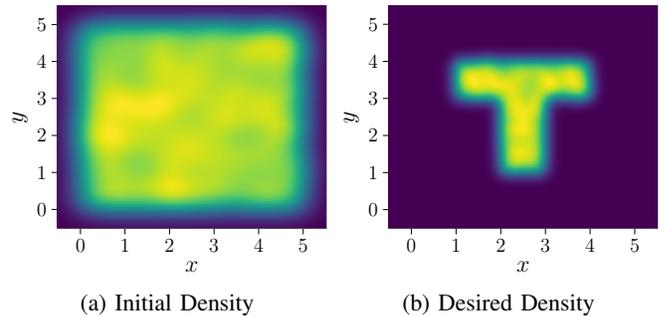

    \centering
    \begin{subfigure}{0.49\linewidth}
    \includestandalone[width=\linewidth]{example3/actual-initial}
    \subcaption{Initial Density}
    \end{subfigure}
    \hfill
    \begin{subfigure}{0.49\linewidth}
    \includestandalone[width=\linewidth]{example3/actual-desired}
    \subcaption{Desired Density}
    \end{subfigure}    
    \caption{Initial and Desired Densities (Actual)}
    \label{fig:initial-desired-actual}
\end{figure}

\begin{figure}
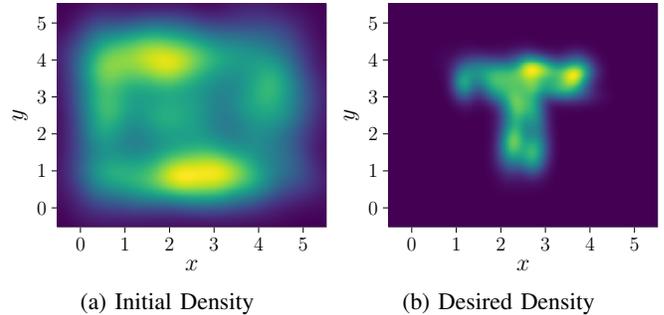

    \centering
    \begin{subfigure}{0.49\linewidth}
    \includestandalone[width=\linewidth]{example3/approx-initial}
    \subcaption{Initial Density}
    \end{subfigure}
    \hfill
    \begin{subfigure}{0.49\linewidth}
    \includestandalone[width=\linewidth]{example3/approx-desired}
    \subcaption{Desired Density}
    \end{subfigure}
    \caption{Initial and Desired Densities (Approximated)}
    \label{fig:initial-desired-approx}
\end{figure}

\begin{figure*}[ht]
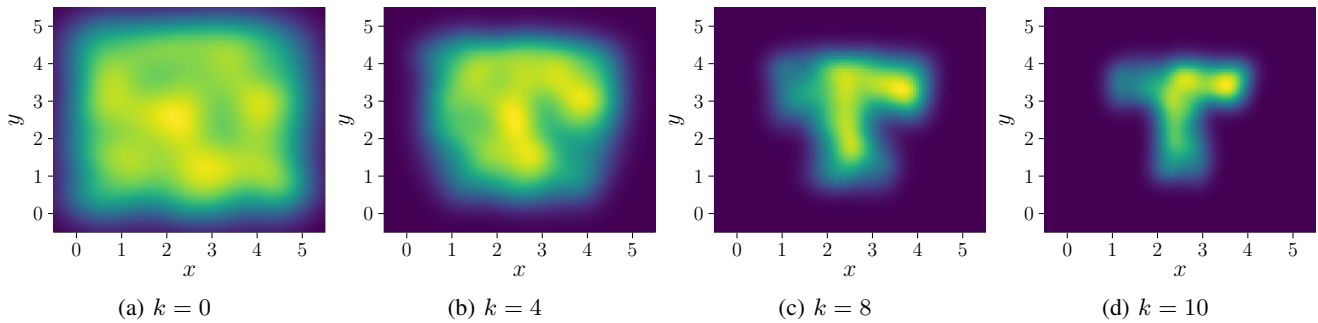

    \centering
    \begin{subfigure}{0.24\linewidth}
    \includestandalone[width=\linewidth]{example3/t0}
    \subcaption{$k=0$}
    \end{subfigure}
    \begin{subfigure}{0.24\linewidth}
    \includestandalone[width=\linewidth]{example3/t4}
    \subcaption{$k=4$}
    \end{subfigure}
    \begin{subfigure}{0.24\linewidth}
    \includestandalone[width=\linewidth]{example3/t8}
    \subcaption{$k=8$}
    \end{subfigure}
    \begin{subfigure}{0.24\linewidth}
    \includestandalone[width=\linewidth]{example3/t10}
    \subcaption{$k=10$}
    \end{subfigure}
    \caption{Density Evolution}
    \label{fig:density-evolution-example2}
\end{figure*}

\section{Conclusion}\label{s:conclusion} 
In this paper, we studied the optimal density steering problem for discrete-time linear dynamical systems whose initial and  target state distributions can be represented as Gaussian mixture models. We proposed a finite-dimensional class of control policies that can be used to reduce the density steering problem into an equivalent nonlinear program which we subsequently reduced into a tractable linear program. Finally, we demonstrated the efficacy of our approach in non-trivial numerical simulations. In our future work, we will address constrained density steering problems based on Gaussian mixture models, by utilizing the divergence metrics between probability distributions represented as Gaussian mixture models \cite{p:kampa2011ClosedFormCSDivergenceGMM, p:chen2018OT-GMM}.

\bibliographystyle{ieeetr}
\bibliography{density_steering}

\end{document}